\documentclass[runningheads]{llncs}
\usepackage{graphicx}
\usepackage{color}
\usepackage{xcolor}
\usepackage{amsfonts}
\usepackage{amssymb}
\usepackage[normalem]{ulem}
\usepackage{svg}
\usepackage{enumitem}
\usepackage{nicematrix}
\usepackage{booktabs}
\usepackage{caption}
\usepackage[nocompress]{cite}

\usepackage{hyperref}

\newcommand{\coloneqq}{\mathrel{:=}}
\definecolor{orcidgreen}{RGB}{166,206,57}
\newcommand{\orcidicon}{\includegraphics[width=0.20cm]{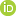}}
\def\orcidID#1{\renewcommand{\thefootnote}{\roman{footnote}}
  \unskip$^{\orcidicon}$\footnote{\orcidicon\color{orcidgreen}\,\scriptsize #1\color{black}\vspace*{-2pt}}\unskip
  \renewcommand*{\thefootnote}{\arabic{footnote}}\unskip
}

\begin{document}

\author{%
  \setcounter{footnote}{-1}
  Amos~Onn\inst{1,2}$^{,*}$\thanks{$^*$Corresponding author}
  \fnmsep\orcidID{0000-0002-8403-0754}
  \and
  Tzipy~Marx\inst{3}
  \and
  Liming~Tao\inst{4}
  \fnmsep\orcidID{0000-0002-3034-925X}
  \and
  Tamir~Biezuner\inst{3}
  \fnmsep\orcidID{0000-0002-3642-0977}
  \and
  Ehud~Shapiro\inst{3}
  \and
  Christoph~A.~Klein\inst{1,5}
  \fnmsep\orcidID{0000-0001-7128-1725}
  \and
  Peter~F.~Stadler\inst{2,6-10}
  \fnmsep\orcidID{0000-0002-5016-5191}
  }

\institute{
  Chair of Experimental Medicine and Therapy Research, University of
  Regensburg, Universit{\"a}tsstra{\ss}e 31, D-93053 Regensburg, Germany,
  \email{amos.onn@klinik.uni-regensburg.de},
  \email{christoph.klein@klinik.uni-regensburg.de}
  \and 
  Bioinformatics Group, Faculty of Mathematics and Computer Science, and
  Interdisciplinary Center for Bioinformatics, University of Leipzig,
  H{\"a}rtelstrasse 16-18, D-04107 Leipzig, Germany,
  \email{amos.onn@uni-leipzig.de},
  \email{studla@bioinf.uni-leipzig.de}
  \and
  Department of Computer Science and Applied Mathematics,
  Weizmann Institute of Science, 234 Herzl Street, POB 26, Rehovot 7610001,
  Israel,
  \email{tzipy.marx@gmail.com},
  \email{tamir.biezuner@weizmann.ac.il},
  \email{ehud.shapiro@weizmann.ac.il}
  \and
  Cellular Tissue Genomics, Genentech, 94080 South San Francisco, CA, USA,
  \email{taoliming.too@gmail.com}
  \and
  Fraunhofer Institute for Toxicology and Experimental Medicine Regensburg,
  Am BioPark 13, D-93053 Regensburg, Germany
  \and 
  Max Planck Institute for Mathematics in the Sciences, Inselstra{\ss}e 22
  D-04103, Leipzig, Germany
  \and
  Institute for Theoretical Chemistry, University of Vienna,
  W{\"a}hringerstra{\ss}e 17, A-1090 Vienna, Austria
  \and
  Facultad de Ciencias, Universidad Nacional de Colombia,
  Bogot{\'a}, Colombia
  \and
  Center for non-coding RNA in Technology and Health, University of Copenhagen,
  Ridebanevej 9, DK-1870 Copenhagen, Denmark
  \and
  Santa Fe Institute, 1399 Hyde Park Rd., 87501 Santa Fe, New Mexico, USA
}

\title{Mutational Dynamics of Very Short Tandem Repeats
}

\authorrunning{A.\ Onn \emph{et al.}}
\titlerunning{STR Mutation Model}

\maketitle              
\begin{abstract}
  Short tandem repeats (STRs) are low-entropy regions in the genome,
  consisting of a short (1-6 bp) unit that is consecutively repeated
  multiple times. They are known for high mutational instability, due to
  so-called stutter-mutations, in which the number of units in the run
  increases or descreases. In particular, STRs with repeat unit length of
  1-2 bp are prone to mutate even within several cell divisions. The
  extremely rapid accumulation of variation makes them interesting
  phylogenetic markers for retrospective single-cell lineage
  reconstruction. Here we model their mutational dynamics at the level of
  individual repeat unit motif and then aggregate length variations over
  many STR loci with the aim of obtaining a very fast ``molecular clock''.
  We calibrate our model based on several datasets with known lineage
  structure prepared from cultured cells. We also verify the error margins
  around our estimated parameters for the model using simulated data. We
  find that the mutational dynamics of STRs are reasonably consistent for a
  given cell-line, but vary among different ones. This suggests that the
  dynamics are not entirely explained by mutations in caretaker genes,
  rather, various other factors play a role --- possibly tissue origin and
  differentiation state. Further data and research is necessary to assess
  their relative effects.
  
  \keywords{short tandem repeats, microsatellites, STR, MS,
    lineage, single cell}
\end{abstract} 

\section{Introduction}

Reconstructing the lineage of single cells is a problem of ongoing
interest, with implications in various fields~\cite{Wagner2020Lineage,
  Tang2020Integrating,ChapalIlani2013Comparing}. At the end of the
spectrum, reconstructing the entire lineage tree of an organism can shed
light on processes of cell differentiation and organisation. This has been
done for the small organism \emph{Caenorhabditis elegans} (631--633 cells),
via direct manipulation of the cells and observation under a
microscope~\cite{celegans_1983}. This method is impractical for larger
organisms, as is obtaining a fully resolved lineage tree. However even
reconstructing subtrees of hundreds of cells could yield results of
interest, informing about the development of specific parts, tissues or
diseases --- in particular, cancer development. In recent years, genetic
recording methods have been used for reconstructing such partial cell
lineages in model organisms~\cite{crispr_2016,Kretzschmar_2012,Lu_2011,
  Frieda2017Synthetic,Kalhor2017Rapidly,Raj2018Simultaneous,
  Schmidt2017Quantitative,McKenna2016WholeOrganism}. Such methods remain
nonetheless irrelevant for human samples, where such a manipulation is not
tenable. In this case, retrospective reconstruction of single cell lineage
has to rely on \emph{a posteriori} observation of naturally-occuring
somatic mutations.

One possibility would be to use single nucleotide polymorphisms (SNPs). The
resolution of such data is however limited by the comparably low mutations
rates, around $10^{-8}$ per site per cell division~\cite{Wang_2012},
implying that even genome-wide sequencing with high coverage --- and the
associated high effort and cost --- will capture only a very small number
of somatic mutations~\cite{Hou_2012,Xu_2012}. A possibly cost-saving
alternative is to focus on ``hot-spot'' SNPs; however, this results in a
panel of SNPs partially driven by selective pressure, thus invalidating the
assumption of independence of mutations and skewing the
results~\cite{Johnson_2011}. This is particularly relevant in the case of
lineage-tracing of cancer, which is one of the major
use-cases~\cite{Hess_2019}. Other possible mutations, such as L1 retro
transposition events, copy number variants, and mitochondrial
heteroplasmy~\cite{Lareau2023Mitochondrial} might be subject to similar
difficulties.

To overcome some of these problems, we have developed an assay for
reconstructing lineage based on sequencing particular regions of the genome
known as short tandem repeats (STRs, also known as
microsatellites)~\cite{cost_effective_lineage_2016,stutter_2019,om_lineage_2021}.
These are low-entropy regions, where a short (1--6 bp) motif of nucleotides
repeats itself in direct consecution, with numerous ($\ge 5$) copies of the
repeat unit following each other. Due to the low entropy, during DNA
duplication, the strands may detach and reattach at an offset of one or
more repeat units, leading to a so-called stutter mutation. As a
consequence the number of repeat units, also referred to as the length of
the STR, increases or decreases. The rate for these mutations is several
orders of magnitude higher than that of random single point mutations:
estimates range from $10^{-3}$--$10^{-5}$ per site per cell division for
di-repeats~\cite{Willems_2016}. In addition, STRs are considered
evolutionary neutral~\cite{Ellegren_2004}. The extremely rapid accumulation
of differences in repeat length makes STRs, in particular those with very
short units, excellent candidates for tracing single cell lineage. The high
mutability, however, also creates difficulties for the analysis of such
data:
\begin{itemize}
\item[(1)] The STR lengths observable in sequencing data differ from the
  \emph{in vivo} state because the unavoidable amplification steps during
  library preparation introduce additional stutter mutations.
\item[(2)] Mutation rates differ substantially among different loci and
  cell types.
\end{itemize}

The first point has been addressed in a previous work~\cite{stutter_2019};
our goal is to devise a continuous-time Markov model describing \emph{in
vivo} temporal evolution of STR length, focusing on the second point. This
model may then be used as a fast molecular clock, enabling estimations of
distance in cell-divisions between single-cell samples. This distance could
then further be used for reconstructing lineage trees of these cells; the
Markov model could also be used directly to assess and maximize the
likelihood of the tree topogies. Such trees, in addition to a topological
structure, also have reliable edge lengths, and thus contribute toward a
better understanding of the evolution of the analysed tissues.

Our aim is to allow for different rates of evolution at each locus. The
size of our panel, consisting of thousands or tens of thousands of loci,
however, makes it infeasible to optimise all parameters simultaneously.
Therefore, we proceed in two steps. First, we estimate a separate base rate
for the evolution of STR length at each locus; then we aggregate into
groups of loci of the same repeat unit motif, and estimate a more involved
transition rate across the range of alleles. We then iteratively repeat
these two steps until the estimated parameters converge. We use eight
artificial lineage trees constructed from four different cell-lines to
estimate the evolutionary dynamics of STR evolution. In addition, we apply
the first step in this approach, estimating the evolution base rates, to
data simulated using the above calibrated models, in order to asses the
stability of the estimated parameters. Although there is coherence between
data from the same cell-lines, we also observe significant variations of
the model parameters among different cell-lines.

\section{STR Data}\label{sect:data}

STR data were produced by a specialized assay comprising the following
steps:
\begin{itemize}
\item[(1)] Following DNA extraction from the sample, whole genome
  amplification was performed. Different protocols were used for different
  datasets: the Repli-G protocol in the \texttt{DU145} dataset, the
  Ampli1 protocol~\cite{ampli1_2015} in the \texttt{HESC} dataset, and an
  in-house phi29-based protocol~\cite{hct_liming_unpub} in the two
  \texttt{HCT116} datasets.
\item[(2)] The amplified DNA was hybridized with a panel of duplex molecular
  inversion probes (MIP) that target genomic regions known to harbour a STR
  with unique flanking regions. For details about this panel
  see~\cite{om_lineage_2021}. In the \texttt{HESC} dataset and the two
  \texttt{HCT116} datasets, the panel OM9 from the above publication was
    used. In the \texttt{DU145} dataset, OM6 was used.
\item[(3)] The selected DNA was then gap-filled, ligated, prepared as
  standard Illumina libraries and sequenced~\cite{om_lineage_2021}.
\item[(4)] The forward and reverse reads were merged using
  \texttt{PEAR}~\cite{pear_2013}.
\item[(5)] The paired-end reads were aligned to a custom index comprising
  the same loci as the hybridization panel. For each locus, target
  sequences with different counts of repetitive units are enclosed between
  unalterered flanking sequences. Each mapped read therefore specifies a
  locus and an STR length.
\item[(6)] For each STR locus, a histogram of observed lengths was
  determined.
\end{itemize}

These STR length measurements cannot be used without corrections because
the \emph{in vitro} amplification steps during library preparation
introduce further stutter mutations. The raw STR length data therefore
follow a characteristic distribution. In previous work, a collection of
reference distributions for the same panel of loci was generated for
various lengths of the original alleles~\cite{stutter_2019}. The
distributions obtained from the sequencing data were then compared to the
reference distributions taking $1$ minus correlation as distance metric. In
order to find the best match, we used k-d tree search~\cite{Freidman:77}. A
further difficulty arose from the fact that the cells of interest are
derived of a diploid genome and hence for each locus there are (usually)
two alleles. We therefore estimated a superposition of two reference
distributions. The result of this procedure was the best estimate of the
true \textit{in vivo} STR length for each allele. In typical samples we
assayed 3000--8000 loci per sample, translating to 5000--10000 alleles.

\setlength{\tabcolsep}{6pt}
\begin{table}
  \caption{Overview of STR datasets used for model
    inference.}\label{table:tree_sizes}
  \begin{center}
    \begin{tabular}{l c c c}
      \toprule
      Tree & \# of leaves & \# of unique cells & Max.\ depth of leaf \\
      \midrule
        \texttt{DU145\_A4} & 671 & 138 & 8 \\
        \texttt{DU145\_A4\_deep} & 405 & 84 & 4 \\
        \texttt{DU145\_A4\_rest} & 266 & 54 & 7 \\
      \midrule
        \texttt{HCT116\_MSI} & 80 & 80 & 3 \\
      \midrule
        \texttt{HCT116\_MSS} & 92 & 92 & 3 \\
      \midrule
        \texttt{WIS\_A8} & 90 & 35 & 1 \\
        \texttt{WIS\_D1} & 523 & 438 & 11 \\
        \texttt{WIS\_D11} & 33 & 15 & 1 \\
      \bottomrule
    \end{tabular}
  \end{center}
\end{table}

In order to derive a model for converting STR length data into evolutionary
distances we used a collection of datasets that were specifically prepared
to investigate the use of STR length data. In brief, a culture was grown
starting from a single cell. At a predefined time point, single cells were
picked from this culture. From some of these cells DNA was extracted and STR
lengths data were measured, while other cells formed the seed for new
cultures. This process was then repeated for several generations to produce
an artificial lineage tree. See~\cite{cost_effective_lineage_2016} for more
details about the preparation. For the present study we used the following
eight datasets:
\begin{itemize}
  \item a tree origined from a single \texttt{DU145} cell
    (prostate cancer, displaying microsatellite
    instablity)~\cite{cost_effective_lineage_2016} (\texttt{DU145\_A4}) as
    well as deep sub-clade of this tree (\texttt{DU145\_A4\_deep}) and the
    complement of this sub-clade (\texttt{DU145\_A4\_rest})
  \item a tree origined from a single \texttt{HCT116} cell
    (colon cancer, displaying microsatellite
    instablity)~\cite{hct_liming_unpub} (\texttt{HCT116\_MSI})
  \item a tree origined from a single \texttt{HCT116} cell with a
    functional version of the MMR gene hMLH1
    reintroduced~\cite{hct116_mss_1994,hct_liming_unpub}
    (\texttt{HCT116\_MSS})
  \item three trees origined from three related (same ampule) single
    \texttt{HESC} cells (first published here) (\texttt{WIS\_A8},
    \texttt{WIS\_D1}, \texttt{WIS\_D11})
\end{itemize}
Sizes of the datasets are detailed in Tab.~\ref{table:tree_sizes}. For a
sketch of the structure of the trees, see Fig.~\ref{fig:tree_shapes}.

\section{Model of STR Length Evolution}

We model the temporal evolution of STR length as a continuous-time Markov
chain. A practical difficulty is that STR lengths show large variations
across loci and hence we need to consider a fairly large number of
states. For the data described in the previous section we consider $5\le
j,k\le 38$, i.e., lengths between $5$ and $38$. To avoid complicated
treatment of ``partial'' repeat units, we only count complete repeat units
and therefore take the STR length to be an integer. In order to reduce the
number of parameters that need to be estimated indepenently for each locus
we assume that we can decompose the rate matrix for a given locus $\ell$
and repeat unit motif $\tau$ into the following simple product form:
\begin{equation}
  \mathbf{R}(\ell,\tau) =  \mu(\ell) \mathbf{R}(\tau)
\end{equation}
Here $\mu(\ell)$ is a single mutation rate specific for each locus
$\ell$. In contrast, $\mathbf{R}(\tau)$ is a matrix common to all loci of
the same motif that describe the \emph{relative} rates of length
changes. Note that we allow both $\mu(\ell)$ and $\mathbf{R}(\tau)$ to
depend on the cell-line. These parameters need therefore to be
independently inferred for each dataset. We further constrain the model by
two additional plausible assumptions:
\begin{itemize}
\item[(1)] The Markov chain is reversible and hence has a stationary
  distribution;
\item[(2)] This stationary distribution coincides with the empirical
  distribution of STR lengths in the human genome.
\end{itemize}
The first assumption is clearly an approximation since STRs that become too
short will not behave like STRs any more. Since the data we actually
measure are far away from this limit, it is nevertheless a safe assumption
to make in our setting. The short time-scale of stutter mutations strongly
suggests that their length distributions are equilibrated in the human
genome. 

In order to reduce the number of parameters that need to be estimated we
use a simple model for the relative rates $\mathbf{R}_{jk}$ of a transition
from length $j$ to $k$, and assume that these values depend only on the
length of the STR and the change in length. We first construct a symmetric,
doubly-stochastic rate matrix as following:
\begin{equation}\label{eq:def_tilde_R}
  \mathbf{\tilde{R}}_{jk} :=
  \begin{cases}
    \exp(\gamma+\alpha(j+k)-\lambda|j-k|), & j \ne k \\
    - \sum_{i \ne j} \mathbf{\tilde{R}}_{ji}, & j = k
  \end{cases}
\end{equation}
where the parameter $\gamma$ is a scaling parameter. Such a symmetric rate
matrix will lead to a uniform stationary distribution. It can be adjusted
to enforce a given empirical stationary distribution by setting
\begin{equation}\label{eq:def_R}
  \mathbf{R}_{jk}(\tau) \coloneqq {s_j}^{-1} \mathbf{\tilde{R}}_{jk}(\tau)
\end{equation}
where $s_j(\tau)$ is the proportion of STRs of repeat unit motif $\tau$ of
length $j$ in the genome. A short derivation of this correction can be
found in \nameref{section:stationary_dist}. The empirical distributions for
the human genome are shown in Fig.~\ref{fig:stationary_dist}.

\begin{figure}[t]
  \begin{tabular}{cccc}
    \includesvg[width=0.225\textwidth]{figures/stationary_dist/stationary_dist_A.svg}
    &
    \includesvg[width=0.225\textwidth]{figures/stationary_dist/stationary_dist_AC.svg}
    &
    \includesvg[width=0.225\textwidth]{figures/stationary_dist/stationary_dist_AG.svg}
    &
    \includesvg[width=0.225\textwidth]{figures/stationary_dist/stationary_dist_AT.svg}
  \end{tabular}
  \caption{Stationary distributions of STR lengths for the repeat unit motifs
    \mbox{$\tau\in\{\mathtt{A},\mathtt{AC},\mathtt{AG},\mathtt{AT}\}$} in
    the reference genome \texttt{hg38}.}\label{fig:stationary_dist} 
\end{figure}

In order to efficiently fit the model to the observed STR count data for a
given sample, we proceed iteratively. We start with a uniform matrix of
transition rates $\mathbf{R}=\mathbf{J}-N\mathbf{I}$, where $\mathbf{I}$ is
the identity matrix, $\mathbf{J}$ is the matrix with entries $1$, and $N$
is the size of the matrices (in our case 34). For this transition rate
matrix, we have the following closed-form solution for the probabilities
$\mathbf{S}_{jk}$ that $j$ is substituted by $k$ after time $t$:
\begin{equation}
  \begin{split}
    \mathbf{S}_{kk}(t) &= \mathbf{S}_{=} = \frac{1}{N} \left( 1 + (N-1) \exp(-N t) \right)
    \\
    \mathbf{S}_{jk}(t) &= \mathbf{S}_{\ne} = \frac{1}{N} \left( 1 - \exp(-N t) \right)
    \quad\text{if}\ j\ne k
  \end{split}
\end{equation}
We aim to use the maximum-likelihood estimator to obtain $\mu(\ell)$ for
each locus. To this end, we compare the STRs in each pair of leaves $u$ and
$v$ in the known trees. We write $d(u,v)$ for their divergence in the tree
measured as the number of generations, more precisely, passages of the
cultures separating the two cell colonies, plus $2$ for the divergence of
each sample from the founder of its colony. For two cells within the same
colony, this number is still taken as $2$; a short derivation of this can
be found in \nameref{section:expectancy_same_colony}. Now we substitue the
time as $t = \mu(\ell) d(u,v)$ in the above solution. Moreover, we write
$n_{=}(\ell,u,v)$ and $n_{\ne}(\ell, u,v)$ for the number of equal and
distinct alleles between the two samples, respectively. Note that
$n_{=}(\ell,u,v)+n_{\ne}(\ell,u,v) = 2$ if $u$ and $v$ have two alleles
each for locus $\ell$, and $n_{=}(\ell,u,v)+n_{\ne}(\ell,u,v) = 1$
otherwise. The likelihood $\mathcal{L}(\ell)$ of observing a set pairs of
STRs at given locus $\ell$ thus takes the form
\begin{equation}\label{eq:site_log_likelihood_simple}
  \begin{array}{lll}
    \log\mathcal{L}(\ell) = \sum_{u,v} & & n_{=}(\ell,u,v)\log
      \mathbf{S}_{=}(\mu(\ell) d(u,v))
    \\
    & + & n_{\ne}(\ell,u,v) \log \mathbf{S}_{\ne}(\mu(\ell) d(u,v))
  \end{array}
\end{equation}
A binary search for a root of the derivative is sufficient to determine
$\max\log\mathcal{L}$ as a function of $\mu$.

Given the estimated values of $\mu(\ell)$, we proceed by refining the model
for the rate matrices $\mathbf{R}(\tau)$. We estimate a single matrix for
all the loci $\ell \in L(\tau)$ of each repeat unit motif $\tau$. Again
this is done for each tree separately. For a given locus $\ell$ and
two samples $u, v$, we count the number $a_{j,k}(\ell,u,v)$ of transitions
$j \to k$ from allele length $j$ in $u$ to allele length $k$ in $v$, as
follows:
\begin{itemize}
\item If both $u$ and $v$ have a single allele, we count the single
  transition $u_1 \to v_1$.
\item If both $u$ and $v$ have two distinct alleles, we order them as
  $u_1 < u_2, v_1 < v_2$, and count two transitions $u_1\to v_1$ and
    $u_2\to v_2$.
\item If $u$ has two alleles and $v$ has one, we count a single transition
  of the allele of $u$ closer to the one in $v$, i.e.\ denoting
  $i = \mathop{minarg}\{|u_1 - v_1|, |u_2 - v_1|\}$ we count $u_i \to v_1$.
  In case of a tie, we take the smaller of the two alleles in $v$.
\item If $u$ has one allele and $v$ has two, we count a single transition
  of the allele of $u$ to the one in $v$ closer to it, i.e.\ denoting
  $i = \mathop{minarg}\{|u_1 - v_1|, |u_1 - v_2|\}$ we count $u_1 \to v_i$.
  In case of a tie, we take the smaller of the two alleles in $u$.
\end{itemize}
The likelihood function for a motif then takes the form 
\begin{equation}\label{eq:r_params_log_likelihood}
  \begin{split}
    \log\mathcal{L}(\tau;\gamma,\alpha,\lambda) &= \sum_{u,v}
    \sum_{\ell \in L(\tau)} \sum_{j,k} a_{j,k}(\ell,u,v)
    \log\mathbf{S}_{jk}(\mu(\ell)d(u,v))
    \\ &\qquad\text{with } \mathbf{S}(t) =
    \exp(t\mathbf{R}(\gamma,\alpha,\lambda))
  \end{split}
\end{equation}
For each value of the three parameters $\gamma$, $\alpha$, and $\lambda$ we
can calculate the rate matrix, and from it the transition probability
matrices for the required time points, in order to evalute
$\log\mathcal{L}$. Optimization of $\log\mathcal{L}$ over the parameter
space is done using \mbox{L-BFGS-B}~\cite{l_bfgs_b_1997}; finding the
maximal log-likelihood value, we obtain the optimal parameters and our
estimation of the rate matrices $\mathbf{R}(\tau)$.

We now proceed to re-estimate the rate coefficients of $\mu(\ell)$ based on
the newly estimated rate matrices. The expression for the likelihood
\begin{equation}\label{gen_log_likelihood}
  \begin{split}
    \log\mathcal{L}(\ell) &= \sum_{u,v} \sum_{j,k} a_{j,k}(\ell,u,v)
    \log\mathbf{S}_{jk}(\mu(\ell)d(u,v))
    \\ &\qquad\text{with } \mathbf{S}(t) =
    \exp(t\mathbf{R}(\tau(\ell)))
  \end{split}
\end{equation}
is somewhat more involved than in~\eqref{eq:site_log_likelihood_simple},
but still analytical and therefore a search for the root of the derivative
can again be used. After re-estimating $\mu$, the rate matrices
$\mathbf{R}(\tau)$ are re-estimated as well
using~\eqref{eq:r_params_log_likelihood} again. We observe that a third
iteration of estimation of $\mu$ leads to no further noticable changes.

\section{Simulations}

In order to assess the error margins on the estimates which our method
provides, we simulate data using the model described above, and attempt to
reassess the parameters for these simulated data.

Given a tree topology, a transition rate matrix, a base mutation rate, and
an initial allele distribution, we generate a single simulated locus in the
following fashion: We randomly select an allele for the root node of the
tree, from the initial distribution, and then we randomly mutate that
allele along the tree topology according to the transition rate matrix and
the base mutation rate. Since our tree topologies all consist of unit
edges, the time parameter for each such progression is $1$. As our
transition matrices are reversible, no bias is introduced by starting at
the root node.

For our simulations we take the following paramters:

\begin{itemize}
\item The tree topolgies of our experimentally-generated datasets:
  \newline
  \texttt{DU145\_A4}, \texttt{DU145\_deep}, \texttt{DU145\_rest},
  \texttt{HCT116\_MSS}, \texttt{HCT116\_MSI}.
\item The transition rate matrices estimated for these trees,
  $\mathbf{R}(\tau)$, for $\tau = \mathtt{AC}, \mathtt{A}$.
\item Either:
  \begin{itemize}
  \item One of a few fixed rates, $0.3, 1, 3, 10$, which are within the
    high-density region of the rate distributions; or
  \item The entire distribution of rates $\mu(\ell)$ estimated per locus
    $\ell$ for these trees.
  \end{itemize}
\item Corresponding to the rates above, either:
  \begin{itemize}
  \item A uniform distribution of all alleles between $5$ and $38$; or
  \item The distribution of alleles in the reference genome for the panel
    of loci we are simulating.
  \end{itemize}
\end{itemize}

The first rate selection allows us to assess the error margins we can claim
for each specific rate estimate. 

The second rate selection allows us to compare the correlation values
between the rates estimated for an experimentally-generated tree and those
estimated for ones simulated using those rates, to the correlation values
between the rates estimated for different trees. This gives us an
assessment of the meaningfulness of these correlation values.
 
\section{Results}

In the top row of Fig.~\ref{fig:locus_coeffs} we can see a density plot of
the distributions of the locus-specific rates $\mu$, for the eight datasets
listed in Sect.~\ref{sect:data} (\nameref{sect:data}). They are grouped by
cell-line: the three \texttt{DU145} trees (one tree and two subtrees), the
two \texttt{HCT116} trees, and the three \texttt{HESC} trees. We note that
all curves have a similar form, suggesting some underlying common
distribution, with varying parameters. We also note some skews of the
center of this distribution, corresponding to a difference in the base rate
of mutation.

For a more detailed comparison of the rates we compute linear regressions of
shared loci for each pair of trees. We begin by taking the pearson correlation
coefficients as a measure of similarity of the distributions
(Fig.~\ref{fig:lin_reg_heatmap}, left panel). Comparing rate estimations for
trees of different cell-lines, we observe a substantial difference, reflected
through low correlations. Comparing the rate estimates for the same cell-lines,
however, we observe very high correlation for each pair \texttt{DU145} trees
and each pair of \texttt{HESC} trees, respectively. The locus-specific rates
for two \texttt{HCT116} trees exhibit a lower correlation than the
\texttt{DU145} and \texttt{HESC} data. This is consistent with the fact that
these are two related cell-lines, but with a major genetic modification
concerning mismatch repair (MMR)~\cite{hct116_mss_1994}. However the
correlations are still higher than those between completely different
cell-lines. This suggests that the common origin of the two variants of
\texttt{HCT116} still plays a significant role.

\begin{figure}[t]
  \begin{tabular}{ccc}
    \includesvg[width=0.3\textwidth]{figures/locus_coeffs_iter3/locus_coeffs_iter3_DU145_linear_raw.svg}
    &
    \includesvg[width=0.3\textwidth]{figures/locus_coeffs_iter3/locus_coeffs_iter3_HESC_linear_raw.svg}
    &
    \includesvg[width=0.3\textwidth]{figures/locus_coeffs_iter3/locus_coeffs_iter3_HCT116_linear_raw.svg}
    \\
    \includesvg[width=0.3\textwidth]{figures/locus_coeffs_iter3/locus_coeffs_iter3_DU145_linear_norm.svg}
    &
    \includesvg[width=0.3\textwidth]{figures/locus_coeffs_iter3/locus_coeffs_iter3_HESC_linear_norm.svg}
    &
    \includesvg[width=0.3\textwidth]{figures/locus_coeffs_iter3/locus_coeffs_iter3_HCT116_linear_norm.svg}
  \end{tabular}
  \caption{Distribution of locus-specific rates $\mu(\ell)$ estimated
    independently for the eight datasets, and grouped by cell-line: the
    three \texttt{DU145} trees are subsets of the same tree; the three
    \texttt{HESC} trees are separately generated; the two \texttt{HCT116}
    trees are grouped together, despite a genetic modification in the
    cell-line seeding \texttt{HCT116-MSS}. See
    also Table~\ref{table:tree_sizes}. Top row are the raw coefficients; bottom
    row they are scaled by the slope coefficients of linear regression
    within each group.}\label{fig:locus_coeffs}
\end{figure}

Using the slope of an orthogonal distance regression as an estimation for
overall scaling of rate (Fig.~\ref{fig:lin_reg_heatmap}, right panel), we
continue to the scaling of the rates estimated for trees of the same cell-line.
We note that the rates for the \texttt{DU145} trees have very little scaling,
as would be expected since the three trees are essentially subsets of the same
tree. A slightly higher $1.3$-fold scaling of the rates of
\texttt{HCT116\_MSI} as compared to the \texttt{HCT116\_MSS} tree is also
as expected, since the variants displaying microsatellite instability
should indeed be more quickly mutating. The large factor between one of the
\texttt{HESC} trees (\texttt{WIS\_D1}) and the two others
(\texttt{WIS\_A8}, \texttt{WIS\_D11}), of respectively $2.9$ and $4.7$, is
somewhat surprising, but might be an artifact arising from the shallowness
of the two latter trees, which allows proper calibration of only the
faster-evolving loci. This might also explain the lower correlation
compared to \texttt{DU145}.

Scaling the locus rate distributions by the these slopes, we obtain the
bottom row of Fig.~\ref{fig:locus_coeffs}. For \texttt{DU145} there is no
large difference, but they are more closely aligned. For \texttt{HESC} this
successfully corrects the shift seen in the top row, aligning the peaks of
the density plot, and we can see the close relation of these distributions,
despite the scaling on both axes. For \texttt{HCT116} the peaks are not
aligned, but the center of the distribution is. This highlights their lower
correlation.

\begin{figure}[t]
  \begin{tabular}{cc}
    \includesvg[width=0.45\textwidth]{figures/linreg_locus_coeffs_iter3/linreg_locus_coeffs_iter3_correlations.svg}
    &
    \includesvg[width=0.45\textwidth]{figures/linreg_locus_coeffs_iter3/linreg_locus_coeffs_iter3_slopes.svg}
  \end{tabular}
  \caption{Linear regressions of the locus-specific rate parameters
    $\mu(\ell)$ between pairs of samples: correlations (left) and estimated
    slopes (right). The slopes are column against row, so that a
    higher-than-$1$ slope means the dataset of the column mutates more
    quickly than that of the row.}\label{fig:lin_reg_heatmap}
\end{figure}

\begin{figure}[t]
  \begin{tabular}{ccc}
    \includesvg[width=0.3\textwidth]{figures/r_mat_params_iter2/r_mat_param_iter2_alpha.svg}
    &
    \includesvg[width=0.3\textwidth]{figures/r_mat_params_iter2/r_mat_param_iter2_gamma.svg}
    &
    \includesvg[width=0.3\textwidth]{figures/r_mat_params_iter2/r_mat_param_iter2_lambda.svg}
  \end{tabular}
  \caption{Optimised rate-matrix model parameters, estimated separately for
    each tree and repeat unit motif.}\label{fig:R_params}
\end{figure}

Let us now turn to the motif-specific parameters $\gamma$, $\alpha$, and
$\lambda$, as seen in Fig.~\ref{fig:R_params}. These are independently
estimated for each repeat unit motif, and we considered only data for the
four motifs $\tau \in \{ \mathtt{A}, \mathtt{AC}, \mathtt{AG},
\mathtt{AT} \}$. Although the parameters have a straightforward
interpretation in the symmetric rate model $\mathbf{\tilde{R}}$, they
are confounded by the stationary length distributions in
Fig.~\ref{fig:stationary_dist}. Despite the negative values of $\alpha$
appearing to suggest that the mutation rate decreases with repeat length, we
observe that $\mathbf{R}$ in fact contains higher values for larger lengths.

Again, we see a similar pattern of coherence among cell-lines. The
parameters for each given motif $\tau$ are very close among the
\texttt{DU145} trees, among the \texttt{HESC} trees, and also between the
two \texttt{HCT116} trees, but are very different across cell-lines.
Remarkably, the parameters for the \texttt{HCT116} trees are significantly
more similar to each other than the locus rates are. This suggests that the
common origin indeed plays a role in the length transition rates, and that
the modification in MMR gene hMLH1 might affect certain loci more than
others, but not so much the per-motif transition rates.

In order to assess the reliability of these estimates, we turn to simulated
data. First, we run multiple ($1000$) simulations with the same base rate and
observe the distributions of the rates re-estimated from these simulations.
This is done for various topolgies, transition rate matrices, and base
rates. For some simultation runs, not a single mutation is generated; these
are excluded from the plots of the distribution.

\begin{figure}[t]
  \begin{tabular}{m{0.3\textwidth} m{0.3\textwidth} m{0.3\textwidth}}
    \includesvg[width=0.3\textwidth]{figures/sim_single_rate/sim_single_rate_0_3.svg}
    &
    \includesvg[width=0.3\textwidth]{figures/sim_single_rate/sim_single_rate_1.svg}
    &
    \includesvg[width=0.3\textwidth]{figures/sim_single_rate/sim_single_rate_3.svg}
    \\
    \includesvg[width=0.3\textwidth]{figures/sim_single_rate/sim_single_rate_10.svg}
    &
    \multicolumn{2}{m{0.6\textwidth}} {
      \includesvg[width=0.6\textwidth]{figures/sim_single_rate/sim_single_rate_nonzero.svg}
    }
  \end{tabular}
  \caption{Distributions of estimated mutation site rates from simulated
  data, in natural log scale. Various transition rates and topologies were
  used, with $1000$ simulations generated for each. Horizontal lines show
  the mean of each distribution, and in black, the original rate used for
  the simulation. In the last panel, which fraction of the simulation runs
  had at least one mutation, and therefore had a non-$0$ estimated rate;
  The rest are excluded from the distributions.}\label{fig:sim_single_rate}
\end{figure}

As we can see in Fig.~\ref{fig:sim_single_rate}, all of the distributions
have their mean (plotted with a dashed colored line), as well as the peak
of the distribution, very near the base rate used for the simulation
(plotted with a solid black horizontal line), with the high-density region
of the distribution within $0.5$ to $1$ orders of magnitude away.

In the last panel we can see which fraction of the simulation runs yielded
no mutations (and therefore had their base rate estimated as $0$). These
were excluded from the other plots so as not to skew the distribution.
Again, as expected, this fraction is higher for lower base rates, smaller
trees, and slower transition matrices. The implication for the experimental
data is, that this is the fraction of loci of a given rate that we would
erroneously calibrate with a $0$ mutation rate and therefore exclude from
the analysis. However this would only lead to ignoring some of our data,
not to bias in the analysis.

\begin{figure}[t]
  \begin{tabular}{m{0.5\textwidth} m{0.4\textwidth}}
    \includesvg[width=0.5\textwidth]{figures/sim_full_panel/sim_full_panel_DU145_correlations.svg}
    &
    \includesvg[width=0.4\textwidth]{figures/sim_full_panel/sim_full_panel_HCT_correlations.svg}
    \\
    \includesvg[width=0.5\textwidth]{figures/sim_full_panel/sim_full_panel_DU145_slopes.svg}
    &
    \includesvg[width=0.4\textwidth]{figures/sim_full_panel/sim_full_panel_HCT_slopes.svg}
  \end{tabular}
  \caption{Linear regressions of the locus-specific rate parameters
  $\mu(\ell)$ between pairs, of the rates estimated from simulations and/or
  from experimental data (and used as the base for the simulations):
  grouped by cell-line, \texttt{DU145} (right) and \texttt{HCT116} (left);
  correlations (top) and estimated slopes (bottom). The slopes are column
  against row, so that a higher-than-$1$ slope means the dataset of the
  column mutates more quickly than that of the row. Marked in green are the
  comparisons of experimental datasets.}\label{fig:sim_full}
\end{figure}

Continuing to the simulations employing the entire distribution of loci, on
the top row of Fig.~\ref{fig:sim_full} we can see the correlations between
the original distributions (estimated from experimental data) and those
estimated from the simulations. Here only the blocks of similar models are
displayed; the correlation accross different models, which correspond to
different cell lines, remains low (see on the left of
Fig.~\ref{fig:sim_full_alt}). From this we can conclude, that the variance
in the estimated rates is not enough to confound our method. 

Within the blocks of similar models, we see that the correlation between
original distributions and the simulations originating from the same
distributions is comparable to that among different models belonging to the
same cell line. Therefore within the same cell line, the differences in
calibrated models might be due to variance. Notably, the correlation
between original and simulation is higher for faster transition rate
matrices. On the left of Fig.~\ref{fig:sim_full_supp} we can see another
comparison of this; the effect of the size of the tree is also noticable,
but dominated by the transition rate matrices.

Finally, we move to the slopes of the correlations among the distributions.
In the bottom row of Fig.~\ref{fig:sim_full} we can see the estimated
scaling parameters between original distributions and those reconstructed
from simulations. For the more quickly-evolving model of \texttt{DU145}, we
see almost no scaling factor between the original and the simulated
distributions. For the \texttt{HCT116} models we see a larger factor, even
larger than that between the two original \texttt{HCT116} distributions. We
note that this factor is larger for the slower of the two models
(\texttt{HCT116-MSI}). This is consistent with the expected effect of
``missing data'' of some loci which receive a $0$ mutation rate in the
simulations. Comparing to the right of Fig.~\ref{fig:sim_full_supp} we see
the same effect.

\section{Concluding Remarks}
  
The main application for estimating the rate models described in the
previous section is retrospective single-cell lineage reconstruction, in
particular of sample sets consisting of various healthy cells and cancer
cells of various stages. The variations in mutational behaviour among
cell-lines, in particular the low correlation of locus specific mutation
rates, however, suggest that cells of different differentiation states
might also vary in their mutational behaviour. This presents a problem when
attempting to reconstruct a tree composed of heterogenous samples, which
would require a unified model of mutation.

The datasets we analysed in this contribution provide us with limited
insight as to the factors governing this difference in locus specific rates
and of length transition rates. This is due to the fact that they all
belong to different individuals as well as different cell types. In order
to further investigate those factors, we will require datasets controlling
for these various factors --- trees originating from the same cell type in
different individuals, and trees originating from various cell types within
a single individual. Such datasets may allow us to determine some
commonalities and possibly develop a collection of models suitable for
reconstructing lineage of various datasets. This will require in addition
the development of a framework for estimating the likelihood of topologies
which include shifts between mutational models.

We also note the limitation of this method. While our precision in
determining the locus-specific rates (as estimated with the help of
simulations) is enough to clearly demonstrate the differences among
completely different cell-lines, between related cell-lines (the two
\texttt{HCT116} variants) the difference of the rates is close to the
resolution of our method. This suggests the necessity of datasets with a
deeper topology, which could allow for more precise inference.

In addition to the lineage reconstruction concerns, a better understanding
of the factors determining the mutational behaviour of STRs might be of
interest as of itself. Microsatellite instability (a marked increase in the
mutation rate of STRs) is a well-studied phenomenon in cancer research, and
further results of our method might contribute to this research.

\section*{Acknowledgments}
A.O. and research in the Klein lab was supported by the Deutsche
Forschungsgemeinschaft (DFG) (TRR-305/A01). Research in the Shapiro lab was
supported by the European Union grants ERC-2014-AdG (project no. 670535)
and EU-H2020-Health (project no. 874606). Research in the Stadler lab is
supported by the German Federal Ministry of Research, Technology and Space
(BMFTR) through DAAD project 57616814 (SECAI, School of Embedded Composite
AI), the German Network for Bioinformatics Infastructure, (de.NBI/RBC,
grant W-de.NBI-018), and jointly be the BMFTR and the S{\"a}chsische
Staatsministerium für Wissenschaft, Kultur und Tourismus in the programme
Center of Excellence for AI-research \emph{Center for Scalable Data
Analytics and Artificial Intelligence Dresden/Leipzig}, project
identification number: SCADS24B.

\section*{Author Contribution}
Conceptualization, A.O. and P.F.S.; formal analysis and software, A.O.; wet-lab
methodology, single-cell isolation, and sample preparation, T.M., L.T. and
T.B.; funding acquistion, E.S. and C.A.K.; wet-lab supervision, E.S.;
supervision, C.A.K and P.F.S.

\section*{Data and Code Availability}

Data for the \texttt{DU145} dataset is available as supplementary material
for~\cite{om_lineage_2021}. Additional data for that dataset, as well as
data for the \texttt{HESC} dataset, will be gladly made available upon
request by the corresponding author. Regarding the \texttt{HCT116} datasets
please contact the authors of~\cite{hct_liming_unpub} until that manuscript
is published.

Code for the various dataset perparation, parameter estimation and
simulation steps in this manuscript, will also be gladly made available
upon request by the corresponding author.

\bibliography{str_model}

\renewcommand\thefigure{S\arabic{figure}}
\setcounter{figure}{0}

\section*{Appendix A}\label{section:stationary_dist}

The following result seems to be well known although we are not aware of a
convenient reference. We therefore include a short proof for completeness.
Denote by $\mathbf{1}$ the vector with all entries $1$ and write
$\mathbf{0}$ for the zero vector. Note that any rate matrix
$\mathbf{\tilde{R}}$ satisfies $\mathbf{\tilde{R}}\mathbf{1}=\mathbf{0}$
since $\mathbf{R}_{jj} = -\sum_{k\ne j}\mathbf{R}_{jk}$ by definition.
Moreover, for a row vector $\mathbf{p}$, denote by
$\mathop{diag}{(\mathbf{p})}^{-1}$ the diagonal matrix with entries
${\mathbf{p}_j}^{-1}$. One easily checks that $\mathbf{p}
\mathop{diag}{(\mathbf{p})}^{-1}= \mathbf{1}$.

\begin{lemma}
  Let $\mathbf{\tilde{R}}$ be a symmetric rate matrix satisfying
  $\mathbf{\tilde{R}}\mathbf{1}=\mathbf{0}$, let $\mathbf{p}$ be a strictly
  positive (row) vector and set $\mathbf{R}\coloneqq
  \mathop{diag}{(\mathbf{p})}^{-1}\mathbf{\tilde{R}}$. Then $\mathbf{p}$ is a
  (left) eigenvector of $\exp(t\mathbf{R})$ with eigenvalue $\mathbf{1}$
  for all $t$.
\end{lemma}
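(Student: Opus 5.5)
The plan is to show first that $\mathbf{p}\mathbf{R}=\mathbf{0}$, and then pass to the exponential through its power series. Using the identity noted just before the lemma, $\mathbf{p}\mathop{diag}(\mathbf{p})^{-1}=\mathbf{1}$, where $\mathbf{1}$ is read here as a row vector, we obtain
\[
\mathbf{p}\mathbf{R}=\mathbf{p}\mathop{diag}(\mathbf{p})^{-1}\mathbf{\tilde{R}}=\mathbf{1}^{T}\mathbf{\tilde{R}}.
\]
Because $\mathbf{\tilde{R}}$ is symmetric, $\mathbf{1}^{T}\mathbf{\tilde{R}}=(\mathbf{\tilde{R}}^{T}\mathbf{1})^{T}=(\mathbf{\tilde{R}}\mathbf{1})^{T}=\mathbf{0}^{T}$ by the hypothesis $\mathbf{\tilde{R}}\mathbf{1}=\mathbf{0}$. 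So $\mathbf{p}$ is a left null vector of $\mathbf{R}$.

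Next, expand $\exp(t\mathbf{R})=\sum_{n\ge0}\frac{t^n}{n!}\mathbf{R}^n$. This series converges absolutely for every finite matrix and every $t$, so $\mathbf{p}$ may be multiplied into it term by term. For $n\ge1$ we have $\mathbf{p}\mathbf{R}^n=(\mathbf{p}\mathbf{R})\mathbf{R}^{n-1}=\mathbf{0}$, leaving only the $n=0$ term. Hence $\mathbf{p}\exp(t\mathbf{R})=\mathbf{p}$ for all $t$, i.e.\ $\mathbf{p}$ is a left eigenvector with eigenvalue $1$. The statement writes this eigenvalue as $\mathbf{1}$, which is plainly meant to be the scalar $1$.

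For the interpretation in the main text, it is worth also checking that $\mathbf{R}$ is itself a rate matrix. Its off-diagonal entries $\mathbf{\tilde{R}}_{jk}/p_j$ are nonnegative because $p_j>0$. Its rows sum to zero, since $\mathbf{R}\mathbf{1}=\mathop{diag}(\mathbf{p})^{-1}\mathbf{\tilde{R}}\mathbf{1}=\mathbf{0}$. Consequently $\exp(t\mathbf{R})$ is stochastic, and normalising $\mathbf{p}$ to sum to one gives the stationary distribution. Detailed balance, $p_j\mathbf{R}_{jk}=\mathbf{\tilde{R}}_{jk}=p_k\mathbf{R}_{kj}$, additionally gives reversibility, as the model assumes. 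None of this is needed for the lemma itself.

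No step is a real obstacle. The only points needing care are the transpose bookkeeping, namely reading $\mathbf{1}$ as a row vector and using the symmetry of $\mathbf{\tilde{R}}$ to turn the right-null property into a left-null property, and justifying the term-by-term multiplication in the exponential series.
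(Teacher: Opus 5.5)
Your proof is correct and follows essentially the same argument as the paper. You derive $\mathbf{p}\mathbf{R}=\mathbf{1}\mathbf{\tilde{R}}=\mathbf{0}$ from the symmetry of $\mathbf{\tilde{R}}$ and $\mathbf{\tilde{R}}\mathbf{1}=\mathbf{0}$, then note that every term with $k\ge 1$ in the power series of $\exp(t\mathbf{R})$ vanishes; your extra remarks (that $\mathbf{R}$ is itself a rate matrix satisfying detailed balance) are also correct, though not needed for the lemma.
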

\begin{proof}
  Symmetry of $\mathbf{\tilde{R}}$ implies
  $\mathbf{1}\mathbf{\tilde{R}}=\mathbf{0}$. The definition of $\mathbf{R}$
  yields
  $\mathbf{p}\mathbf{R} =
  \mathbf{p} \mathop{diag}{(\mathbf{p})}^{-1} \mathbf{\tilde{R}} =
  \mathbf{1} \mathbf{\tilde{R}} = \mathbf{0}$. We 
  compute
  \begin{equation*}
    \mathbf{p}\exp(t\mathbf{R}) =
    \mathbf{p}\sum_{k=0}^{\infty} \frac{t^k}{k!} \mathbf{R}^k =
    \mathbf{p}\mathbf{I} +
    \underbrace{\mathbf{p}\mathbf{R}}_{\mathbf{0}} \sum_{k=1}^{\infty}
    \frac{t^k}{k!} \mathbf{R}^{k-1} = \mathbf{p} 
  \end{equation*}
  i.e., $\mathbf{p}$ is indeed an eigenvector of $\exp(t\mathbf{R})$
  with eigenvalue $1$.
  \qed
\end{proof}
In particular, therefore, if $\mathbf{p}$ is a strictly positive
probability distribution, it is a stationary distribution of the continuous
time Markov chain with rate matrix
$\mathbf{R}=\mathop{diag}{(\mathbf{p})}^{-1}\mathbf{\tilde{R}}$, where
$\mathbf{\tilde{R}}$ is any symmetric rate matrix.

\section*{Appendix B}\label{section:expectancy_same_colony}

The expected cell division distance of two randomly selected cells in a
culture can be estimated by assuming a constant duplication rate. Recall
that each culture is started from a single cell in the experiments
described in Sect.~\ref{sect:data} (\nameref{sect:data}). Assuming all
divisions are symmetric, the culture forms in the $n$-th generation
(considering the seed cell a zeroth generation) a complete binary tree with
\mbox{$2^{n+1}-1$} cells of which $2^n$ are leaves. In this case a given
extant cell (leaf of the tree) has $2^h$ other extant cells within its
clade of height $h$, and thus $2^{h-1}$ cells of distance exactly $2h$.
Thus, denoting the depth of the entire colony as $n$, the sum of all
distances is: $2\sum_{h=1}^n h 2^{h-1} = (n-1)2^{n+1}+2$, and hence the
expected distance (divided by the number of other leaves, $2^n-1$) is
$2(n-1) + O(n2^{-n})$. Since we measure time in terms of passages of the
culture, not replications, we obtain an average distance of
$2\left(1-\frac{1}{n}\right)$ up a finite size correction of order
$2^{-n}$. For estimated $n$ values of 10--30 this is close enough to be
taken as $2$.

\clearpage
\section*{Appendix C}\label{section:supplementary_figs}

\begin{figure}[th]
  \begin{tabular}{cc}
    \includesvg[width=0.46\textwidth]{figures/trees/tree_DU145_A4_all.svg}
    &
    \includesvg[width=0.46\textwidth]{figures/trees/tree_WIS_D1.svg}
    \\
    \includesvg[width=0.225\textwidth]{figures/trees/tree_WIS_A8.svg}
    &
    \includesvg[width=0.225\textwidth]{figures/trees/tree_WIS_D11.svg}
  \end{tabular}
  \caption{Sketches of \texttt{DU145} and \texttt{HESC} trees. Nodes
    represent colonies. The number of extracted samples (including
    technical duplicates) is annotated in each node. The value in brackets
    gives the number of unique extracted cells. For \texttt{DU145}, the
    entire tree is \texttt{DU145\_A4}, the marked clade is
    \texttt{DU145\_A4\_deep}, and the tree excluding that clade is
    \texttt{DU145\_A4\_rest}. See also Tab.~\ref{table:tree_sizes} for an
    overview of the trees.}\label{fig:tree_shapes}
\end{figure}

 \begin{figure}[t]
   \begin{tabular}{cc}
     \includesvg[width=0.45\textwidth]{figures/sim_full_panel/sim_full_panel_correlations.svg}
     &
     \includesvg[width=0.45\textwidth]{figures/sim_full_panel/sim_full_panel_slopes.svg}
   \end{tabular}
   \caption{Linear regressions of the locus-specific rate parameters
   $\mu(\ell)$ between pairs, of the rates estimated from simulations and/or
   from experimental data (and used as the base for the simulations),
   correlations (left) and estimated slopes (right). The slopes are column
   against row, so that a higher-than-$1$ slope means the datasets of the
   column mutates more quickly than that of the row. Marked in green are
   the comparisons of experimental datasets.}\label{fig:sim_full_alt}
 \end{figure}

\begin{figure}[t]
  \begin{tabular}{cc}
    \includesvg[width=0.45\textwidth]{figures/sim_full_panel_supp/sim_full_panel_supp_correlations.svg}
    &
    \includesvg[width=0.45\textwidth]{figures/sim_full_panel_supp/sim_full_panel_supp_slopes.svg}
  \end{tabular}
  \caption{Linear regressions of the locus-specific rate parameters
  $\mu(\ell)$ between pairs, of the rates estimated from simulations and/or
  from experimental data (and used as the base for the simulations),
  correlations (left) and estimated slopes (right). Here we use for each
  simulation the model parameters calibrated for one tree and the topology
  of a different tree: the first name in each entry is the tree of which
  the mutation model parameters are used, the second name is the tree of
  which the topology is used. The slopes are column against row, so that a
  higher-than-$1$ slope means the datasets of the column mutates more
  quickly than that of the row. We can see a slight effect of faster
  transition rates in some models, otherwise these values match those of
  Fig.~\ref{fig:sim_full_alt}.}\label{fig:sim_full_supp}
\end{figure}

\end{document}